\newcommand{\ep}{\varepsilon}
\newcommand{\R}{\mathbb{R}}
\newcommand{\VG}{{\rm VG}}
\newtheorem{theorem}{Theorem}
\newtheorem{prop}[theorem]{Proposition}
\newtheorem{rmk}[theorem]{Remark}
\newtheorem{ex}[theorem]{Example}
\title{MEGEX: Data-Free Model Extraction Attack against Gradient-Based Explainable AI}
\author{%
 Takayuki Miura \\
 NTT Social Informatics Laboratories \\
 Tokyo, Japan \\
 \texttt{takayuki.miura.br@hco.ntt.co.jp} \\
 \AND
 Satoshi Hasegawa \\ 
 NTT Social Informatics Laboratories \\
 Tokyo, Japan \\
\texttt{satoshi.hasegawa.ks@hco.ntt.co.jp} \\
 \And 
 Toshiki Shibahara \\
 NTT Social Informatics Laboratories \\
 Tokyo, Japan \\
 \texttt{toshiki.shibahara.de@hco.ntt.co.jp} \\
}
\begin{document}

\maketitle

\begin{abstract}
The advance of explainable artificial intelligence, which provides reasons for its predictions, is expected to accelerate the use of deep neural networks in the real world like Machine Learning as a Service (MLaaS) that returns predictions on queried data with the trained model. Deep neural networks deployed in MLaaS face the threat of model extraction attacks. A model extraction attack is an attack to violate intellectual property and privacy in which an adversary steals trained models in a cloud using only their predictions. In particular, a data-free model extraction attack has been proposed recently and is more critical. In this attack, an adversary uses a generative model instead of preparing input data. The feasibility of this attack, however, needs to be studied since it requires more queries than that with surrogate datasets. In this paper, we propose MEGEX, a data-free model extraction attack against a gradient-based explainable AI. In this method, an adversary uses the explanations to train the generative model and reduces the number of queries to steal the model. Our experiments show that our proposed method reconstructs high-accuracy models -- 0.97$\times$ and 0.98$\times$ the victim model accuracy on SVHN and CIFAR-10 datasets given 2M and 20M queries, respectively. This implies that there is a trade-off between the interpretability of models and the difficulty of stealing them.
\end{abstract}

\section{Introduction}
\label{sec:intro}
With the advance of deep neural networks (DNNs), artificial intelligence (AI) based on DNNs is expected to solve various social problems. As a DNN-based cloud service, Machine Learning as a Service (MLaaS) has become increasingly common. MLaaS receives queries from users and returns predictions by a deployed model. To support decision-making on the basis of the predictions in the real world, explainable AI~\cite{adadi2018peeking}, which also provides reasons for the predictions, has attracted attention. In Google Cloud, for example, gradient-based explainable AI can be used to show the importance of each element in input data.\footnote{\url{https://cloud.google.com/explainable-ai}}

The model deployed in MLaaS is crucial to its service, but the model is under threat of being stolen by adversaries. The threat is called a model extraction attack~\cite{tramer2016stealing}. In this attack, an adversary prepares input data and queries them to the target victim model. By using the predictions, the adversary trains a clone model with the same accuracy as the victim model as shown in Fig.~\ref{ms_picture}.
\begin{figure}[!t]
\centering
\includegraphics[width=0.8\linewidth]{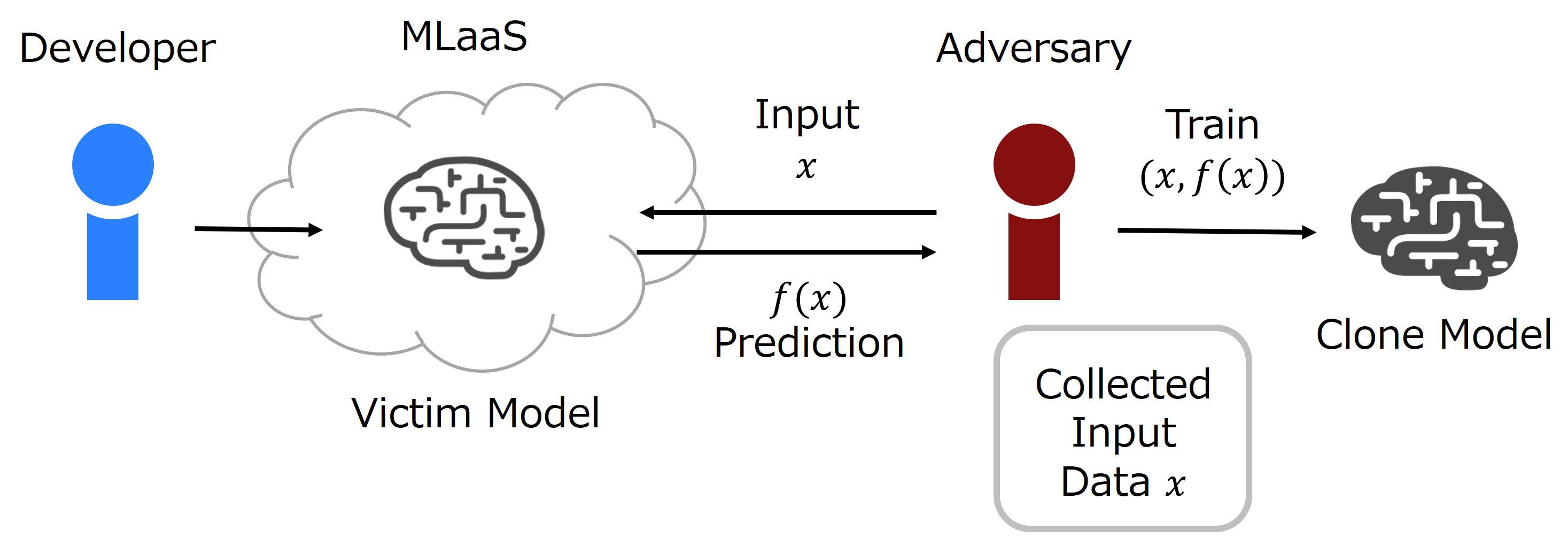}
\caption{Model Extraction Attack.}\label{ms_picture}
\end{figure}
Since companies developing DNNs spend a lot of money on training highly accurate models, they are concerned about protecting such valuable models to maintain their competitive advantages. In addition to this problem, the adversary can use the stolen model for other attacks including adversarial examples and a model inversion attack~\cite{shi2018active}.

In most model extraction attacks, adversaries must collect input data for queries. Valuable models, in fact, are often trained using the rare dataset, for instance, sales data collected by a company. In such cases, input data is difficult for the adversary even to collect. For this reason, a data-free model extraction attack has been proposed~\cite{truong2020data, kariyappa2020maze}, in which the adversary prepares a generative model to generate input data instead of collecting them (as shown in Fig.~\ref{datafreems_picture}). In a data-free model extraction attack, the adversary can steal even a valuable model that is trained with the rare dataset. On the other hand, the disadvantage of the attack is that the adversary requires more queries since the adversary has to train the generative models from scratch to generate adequate input data. Since numerous queries increase the financial cost of querying data and the possibility of being detected as abnormal users, the currently proposed data-free model extraction attacks do not seem to be severe threats. However, only a few studies have been conducted to improve the efficiency of the data-free model extraction attacks. Therefore, further studies must be conducted to clarify the threat of the attacks before DNNs are used more widely in the real world.

\begin{figure}[!t]
\begin{center}
\includegraphics[width=0.8\linewidth]{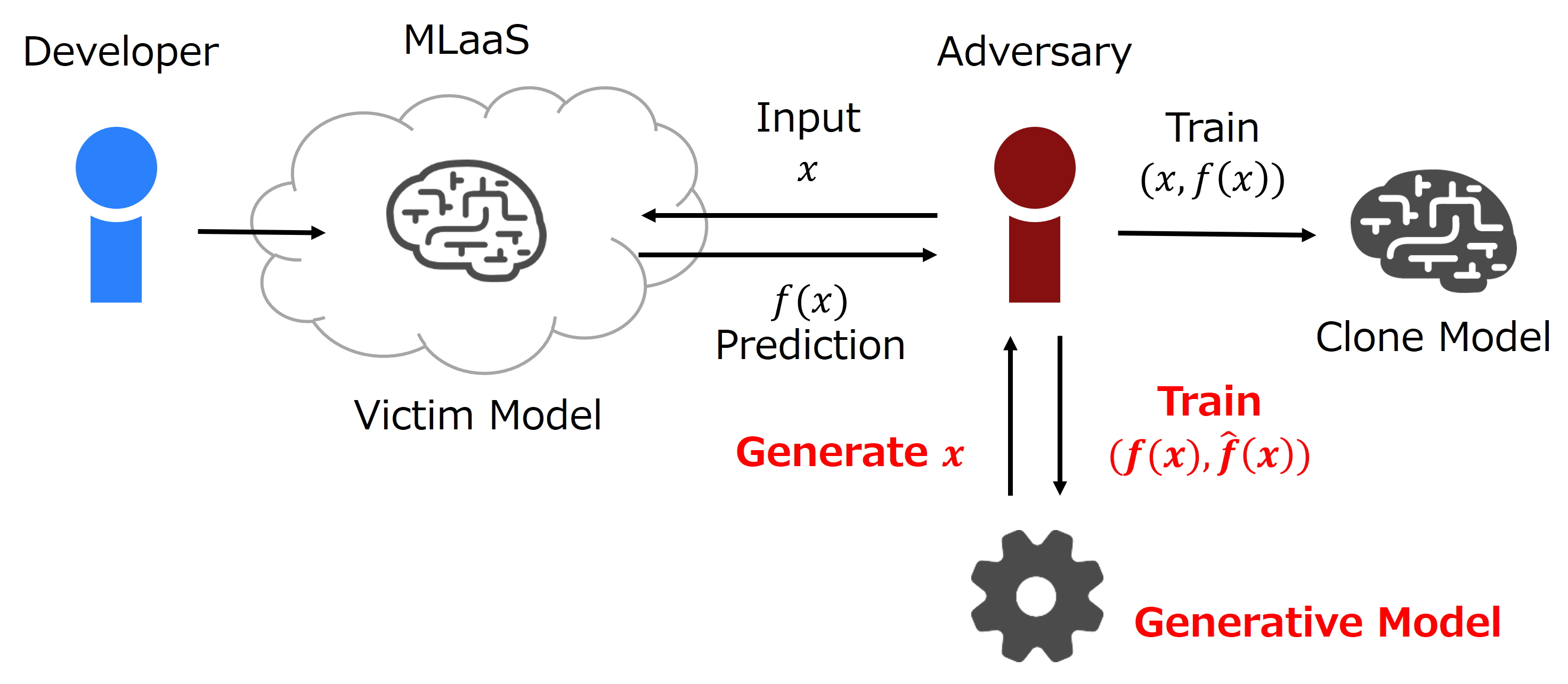}
\caption{Data-Free Model Extraction Attack.}\label{datafreems_picture}
\end{center}
\end{figure}

In this paper, we focus on explainable AI, which has become more commonly used recently to improve the usability of DNNs in the real world. We propose MEGEX, a data-free model extraction attack against gradient-based explainable AI. In MEGEX, we assume that the victim model is explainable AI with Vanilla Gradient~\cite{simonyan2013deep}. The adversary uses the explanations for computing the exact gradients to train the generative model and reduces the number of queries to 6/7 times that of the existing method. Our experiment shows that MEGEX achieves a more accurate clone model with CIFAR-10. The adversary in the existing method obtains a clone model with 80.6\% test accuracy while the adversary  in MEGEX obtains a clone model with $\ell_1$ norm loss function has 92.3\% test accuracy. This implies that there is a trade-off between the interpretability of models and the difficulty of data-free model extraction attacks.

\section{Related Work}
\label{sec:related}
\paragraph{Explainable AI} DNNs achieve high accuracy in various tasks, but their low transparency causes problems such as unfair and untrusted predictions. To improve transparency of DNNs, explainable AI is actively studied by adding post-hoc explanations~\cite{ribeiro2016should,simonyan2013deep,wachter2017counterfactual,smilkov2017smoothgrad,sundararajan2017axiomatic} and designing transparent models~\cite{NEURIPS2018_743394be,pedapati2020learning}. In image recognition tasks, gradient-based methods are commonly used to show the importance of an individual pixel~\cite{simonyan2013deep,smilkov2017smoothgrad,sundararajan2017axiomatic}. The most basic method is Vanilla Gradient~\cite{simonyan2013deep} that computes a gradient of a prediction with respect to an individual pixel. The gradient map produced by Vanilla Gradient is used as to explain which part of the image is important for the prediction. On the basis of Vanilla Gradient, many improved methods have been proposed. For example, SmoothGrad~\cite{smilkov2017smoothgrad} reduces noise in a gradient map, and Integrated Gradients~\cite{sundararajan2017axiomatic} solves the problem that not all important pixels have large gradients.

In this paper, we aim to take the first step to study data-free model extraction attacks to explainable AI by proposing an attack against VanillaGrad. Although the targeted explanation method is limited, this paper sheds light on new threats against explainable AI.

\paragraph{Model Extraction Attacks} As MLaaS becomes common, model extraction attacks that steal models in MLaaS have attracted attention~\cite{tramer2016stealing,yu2020cloudleak,yan2020cache,chandrasekaran2020exploring}. In the attacks, attackers prepare surrogate datasets in advance, query data in the datasets via MLaaS API and obtain predictions by the models in MLaaS. Finally, the attackers train clone models using queried data and predictions. Many researchers have studied the threat of model extraction attacks from practical and theoretical perspectives. Yu~et~al. proposed an attack that reduced the number of queries with transfer learning~\cite{yu2020cloudleak}. Yan~et~al. proposed a cache side channel attack to steal a DNN's architecture~\cite{yan2020cache}. Chandrasekaran~et~al. theoretically discussed the threat of model extraction attacks by formalizing them as active learning~\cite{chandrasekaran2020exploring}.

Considering the limited ability of attackers, data-free model extraction attacks have been proposed~\cite{truong2020data,kariyappa2020maze}. In this attack, attackers are assumed to have difficulty preparing surrogate datasets. For example, medical data are difficult for attackers to collect in advance. Data-free model extraction attacks are based on a knowledge distillation method~\cite{hinton2015distilling} that does not use training data. We explain this attack in detail in Section~\ref{sec:preliminary}.

\paragraph{Model Extraction Attacks to Explainable AI} There have been only a few studies on model extraction attacks against explainable AI~\cite{milli2019model,aivodji2020model}. Milli~et~al. proposed an attack that used a loss regarding the distance between a gradient-based explanation of a victim model and that of a clone model and experimentally showed that the loss improved the efficiency of the attack~\cite{milli2019model}. Ulrich~et~al. proposed an attack that used counterfactual explanations of the targeted model for training the clone model efficiently~\cite{aivodji2020model}.

In the above attacks, attackers are assumed to prepare surrogate datasets in advance. In other words, data-free model extraction attacks against explainable AI have not been proposed yet.

\section{Preliminaries}
\label{sec:preliminary}
In this section, we introduce basic notions before describing our proposed method.

\subsection{Deep Neural Networks}
In this paper, we assume DNNs as follows. Let $f : {\bf X} \to {\bf Y}$ be a DNN where ${\bf X}=\R^d$ is an input data space and ${\bf Y} := \Delta_{\ge 0}^c = \{ y \in \R^c_{\ge 0} \mid \sum_{i=1}^c y_i = 1 \}$ is a space of outputs, i.e., confidence values. A DNN $f$ classifies input data into $c$ classes. We do not assume specific DNN architectures. In other words, we assume that $f$ is a typical DNN such as ResNet for the image classification task. The only assumption is that a softmax function is used for the activation function in the last layer.

When training a DNN, we use supervised training data $(x,y) \in {\bf X} \times {\bf C}$, where $x$ is input data, $y$ is a one-hot vector of its label, and ${\bf C}\subset {\bf Y}$ is a set of one-hot vectors. An objective function is typically defined by using a loss function $l : {\bf Y}\times{\bf Y} \to \R_{\ge 0}$ as follows:
\[
\mathcal{L}(\theta_f , D) := \sum_{i=1}^N l (f(x_i), y_i) .
\]
Here $D = \{ (x_i, y_i)\}_{i=1, \ldots , N} \in ({\bf X} \times {\bf C})^N$ is a training dataset, and a vector $\theta_f$ is the parameter of the DNN $f$. The goal of the training is to find the parameter minimizing $\mathcal{L}(\theta_f , D)$. The parameter $\theta_f$ is iteratively updated on the basis of optimization methods, for instance, stochastic gradient descent (SGD). Specifically, $\mathcal{L}(\theta_f , D)$ is minimized by updating the parameter $\theta_f$ in the opposite direction of the gradient $\nabla_{\theta_f} \mathcal{L}(\theta_f , D)$.

\subsection{Gradient-Based Explainable AI}
The {\bf Vanilla Gradient}~\cite{simonyan2013deep} is a gradient-based explanation method that highlights important pixels in input images for prediction results by a DNN. Let $f : {\bf X} \to {\bf Y}$ be a DNN required to explain and $x \in {\bf X}$ be an input image. The Vanilla Gradient returns the following vector as an explanation:
\[
{\rm VG}(x) := \nabla_x f(x) \in {\bf X}^c .
\]
A large partial differential value means that change in the corresponding pixel causes a large change in the prediction results. Such a pixel can be regarded as an important pixel for the prediction results. Therefore, the map of the partial differential values is used to explain the prediction result by the DNN. 

\begin{rmk}
In this paper, we use two gradients $\nabla_{\theta_f}$ and $\nabla_x$. The former is differential values with respect to the parameter of a DNN, and the latter is those with respect to pixels of an input image. The difference between these two gradients needs to be carefully recognized.
\end{rmk}

\subsection{Data-Free Model Extraction Attacks}
\label{ms_def}
In model extraction attacks, an adversary steals a trained model by making a copy of the model using its predictions. We call the trained model $f:{\bf X} \to {\bf Y}$ a victim model and the copied model $\hat{f}:{\bf X} \to {\bf Y}$ a clone model. For the adversary, the victim model is a black-box and the adversary can only use its predictions for queries. The goal of the adversary is to obtain a high-accuracy clone model.

In most model extraction attacks, adversaries collect input data $x \in {\bf X}$ and query them to the victim model to obtain the training dataset for the clone model $\{ (x_i, f(x_i))\}$. Querying data to the victim model and training the clone model are conducted alternately.

In contrast, a data-free model extraction attack, which has been proposed recently, does not require adversaries to collect input data. In this attack, an adversary trains a generative model instead of collecting input data. The generative model is a DNN $G : \R^r \to {\bf X}$ that transforms noise drawn from a Gaussian distribution into input data. The adversary queries the generated input data $x\in{\bf X}$ to the victim model and obtains training data $(x, f(x))$. The clone model $\hat{f}$ is trained to be similar to the victim model $f$. On the other hand, the generative model is trained to generate data whose prediction by the clone model is different from that by the victim model. In other words, the generative model is trained to increase the loss function of the clone model. Since the loss function regarding such generated data is large, the parameter of the clone model is expected to be effectively updated. Although the generative model $G$ does not learn a distribution of an input data space ${\bf X}$, it learns to generate input data such that they help the clone model training at each stage.

\paragraph{Formalization} The training of the clone and the generative model is formalized as follows. Let $z \sim \mathcal{N}(0, I_r)$ be a noise drawn from Gaussian distribution and $l :{\bf Y} \times {\bf Y} \to \R$ be a loss function. Then the objective function is 
\[
\mathcal{L} := \mathcal{L}(\theta_G, \theta_{\hat{f}}, \theta_{f}, z) := -l(f(G(z)), \hat{f}(G(z))) .
\]
The corresponding optimization problem is
\begin{equation}\label{loss}
\min_{\theta_G} \max_{\theta_{\hat{f}}} \mathbb{E}_{z \sim \mathcal{N}(0, I_r)} [\mathcal{L}(\theta_G, \theta_{\hat{f}}, \theta_{f}, z)].
\end{equation}
This problem is solved by computing the gradients for the generative model $G$ and the clone model $\hat{f}$ alternately. Specifically, the adversary needs two gradients $\nabla_{\theta_{\hat{f}}} \mathcal{L}$ and $\nabla_{\theta_G} \mathcal{L}$. The adversary can compute the former by the backpropagation, but cannot compute the latter.

Hence the following approximation is usually used. First, for $x = G(z)$, set 
\begin{equation}\label{loss_part}
L(x) := l (f(x), \hat{f}(x)).
\end{equation}
Then we see $\mathcal{L} = -L \circ G(z)$. The following proposition holds, which is also essential to our proposed method described in the next section.

\begin{prop}\label{reduction}
Set $x:= G(z)$. If an adversary has the gradient $\nabla_x L(x)$, then the adversary can obtain the gradient $\nabla_{\theta_G} \mathcal{L}$. Therefore, an adversary with $\nabla_x L(x)$ is able to train the generative model.
\end{prop}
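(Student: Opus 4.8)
The plan is to recognize that $\mathcal{L} = -L \circ G(z)$ is simply a composition of the generator $G(\,\cdot\,;\theta_G)$ with the scalar map $L$, so that $\nabla_{\theta_G}\mathcal{L}$ is governed entirely by the multivariate chain rule. First I would write $x = G(z;\theta_G)$ and differentiate coordinatewise: for each component $\theta_{G,j}$ of the parameter vector,
\[
\frac{\partial \mathcal{L}}{\partial \theta_{G,j}} = -\sum_{i=1}^{d} \frac{\partial L}{\partial x_i}(x)\,\frac{\partial G_i(z;\theta_G)}{\partial \theta_{G,j}},
\]
which in vector form reads $\nabla_{\theta_G}\mathcal{L} = - J_G^{\top}\,\nabla_x L(x)$, where $J_G$ is the Jacobian of $G$ with respect to $\theta_G$ evaluated at $(z,\theta_G)$ and $x = G(z)$.

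The second step is the key observation that makes the proposition useful: the Jacobian $J_G$ depends only on the generative model $G$ and its own parameters $\theta_G$, both of which are fully under the adversary's control. Hence $J_G$ — or, equivalently, the vector--Jacobian product $v \mapsto J_G^{\top} v$ — is computable by standard backpropagation through $G$, with no access to the victim model $f$. Seeding this backward pass with the given vector $\nabla_x L(x)$ at the output layer of $G$ produces $\nabla_{\theta_G}\mathcal{L}$ directly, which establishes the claim.

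The only nontrivial point, and the reason the statement is worth isolating, is locating precisely which quantity the adversary cannot produce on its own. Since $L(x) = l(f(x), \hat{f}(x))$ involves the black-box victim $f$, the factor $\nabla_x L(x)$ is the sole ingredient in the chain rule that requires information about $f$; everything else is internal to the adversary. I therefore expect no analytic obstacle, only the routine bookkeeping of confirming that the composition is differentiable — each of $l$, $f$, $\hat{f}$, and $G$ being a smooth map with a softmax output layer — so that the chain rule applies. The content of the proposition is thus conceptual rather than computational: it reduces the otherwise infeasible gradient $\nabla_{\theta_G}\mathcal{L}$ to the single external quantity $\nabla_x L(x)$, which the next section obtains from the Vanilla Gradient explanation.
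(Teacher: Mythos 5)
Your proof is correct and follows essentially the same route as the paper's: apply the chain rule to $\mathcal{L} = -L\circ G(z)$ to get $\nabla_{\theta_G}\mathcal{L} = -\nabla_{\theta_G}G(z)\cdot\nabla_x L(x)$ (your $-J_G^{\top}\nabla_x L(x)$), then observe that the Jacobian factor is computable by backpropagation through $G$ alone. Your write-up merely makes explicit what the paper leaves implicit — the coordinatewise chain rule and the fact that only $\nabla_x L(x)$ involves the black-box victim $f$ — so there is no substantive difference.
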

\begin{proof}
By the chain rule, we see
\begin{equation}
 \nabla_{\theta_G} \mathcal{L} = -\nabla_{\theta_G} G(z) \cdot \nabla_x L(x).
\end{equation}
The adversary can compute $\nabla_{\theta_G} G(z)$ by the backpropagation. Therefore, we see that the adversary is able to obtain $\nabla_{\theta_G} \mathcal{L}$ and train the generative model.
\end{proof}

By this proposition, we see that it is enough for the adversary to know $\nabla_x L(x)$. The adversary, however, cannot compute it in the existing data-free model extraction attacks. Thus, the adversary uses the following approximation.

\paragraph{Approximation of Gradients} In the studies by Truong~et~al.~\cite{truong2020data} and Kariyappa~et~al.~\cite{kariyappa2020maze}, the adversary uses zeroth-order gradient estimation with additional $m$ queries to obtain $\nabla_x L(x)$:
\begin{equation}\label{approx_loss}
\nabla_x L(x) \approx \frac{1}{m} \sum_{i=1}^m \frac{d(L(x + \ep u_i) - L(x))}{\ep} u_i, 
\end{equation}
where $u_1, \ldots , u_m $ are random $d$-dimensional vectors whose norms are one. In this method, for one training of the generative model, the adversary needs to query $m+1$ input images.

\paragraph{Loss Functions} The existing method~\cite{truong2020data} shows the difference in the loss function used by the adversary causes the difference in the clone model accuracy. We introduce loss functions used in the existing methods. 

In MAZE~\cite{kariyappa2020maze}, an adversary uses KL-divergence
\[
l_{{\rm KL}}(y, \hat{y}) = \sum_{i=1}^c y_{i} \log \frac{y_i}{\hat{y}_i}
\]
as a loss function. In data-free model extraction (DFME)~\cite{truong2020data}, in addition to KL-divergence, an adversary uses the L1 norm with reconstructed logits
\[
l_{\ell_1}(y, \hat{y}) = || \sigma^*(y) - \sigma^*(\hat{y}) ||_1
\]
as a loss function in order to prevent gradients from vanishing. We call this loss function the $\ell_1$ norm loss. Here a function $\sigma^* : Y \to \R^c$ is the inverse function of the softmax function under the assumption that the sum of logits is zero. This function is described as follows: \[ \sigma^*(y)_i = \log y_i - \frac{1}{c} \sum_{j=1}^c \log y_j
\]
for each $1\le i\le c$. 

\section{MEGEX: Data-Free Model Extraction Attacks to Gradient-Based Explainable AI}
\label{sec:proposal}
In this section, we propose MEGEX -- a data-free \underline{m}odel \underline{e}xtraction attack against \underline{g}radient-based \underline{ex}plainable AI.

\subsection{Overview}
Let $f : {\bf X} \to {\bf Y}$ be a victim model that outputs an explanation of Vanilla Gradient. As with DFME, the adversary trains both a clone model $\hat{f} : {\bf X} \to {\bf Y}$ and a generative model $G : \R^r \to {\bf X}$ that generates input data. In the proposed method, MEGEX, the adversary can use the gradient $\VG(x) = \nabla_x f(x)$ to train the generative model. 

\begin{figure}[!t]
\begin{center}
 \centering\includegraphics[width=0.8\linewidth]{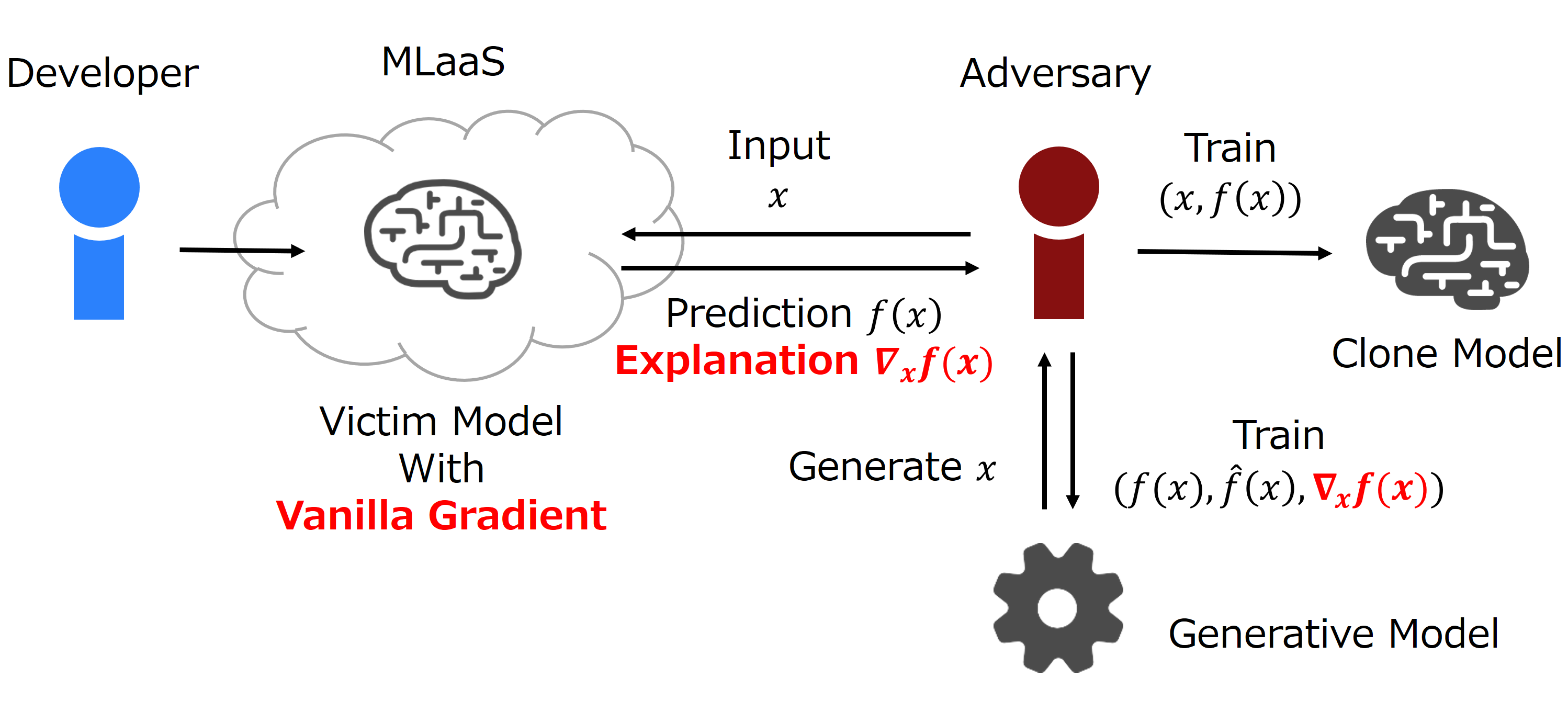}
 \caption{MEGEX: a data-free model extraction attack against gradient-based explainable AI}\label{proposal_picture}
\end{center} 
\end{figure}

The overview of the proposed method is shown in Fig.~\ref{proposal_picture} and Algorithm~\ref{proposalmethod}.

\begin{figure}[t]
 \begin{algorithm}[H]
 \caption{: MEGEX}
 \label{proposalmethod}
 \begin{algorithmic}[1]
 \Require victim model $f$, query budget $Q$, positive numbers $N_G, N_C$, learning rates $\eta_G, \eta_C$ 
 \Ensure clone model$\hat{f}$
 \State Initialize $G, \hat{f}, q \leftarrow 0$
 \While{ $q<Q$}
 \For{$i \leftarrow 1,\ldots,N_G$}
 \Comment{Training of $G$}
 \State $z \leftarrow \mathcal{N}(0, I_r)$
 \Comment{Noise drawn from Gaussian distribution}
 \State $x \leftarrow G(z)$
 \State $y \leftarrow f(x)$, $\hat{y} \leftarrow \hat{f}(x), \nabla_x f(x) \leftarrow \VG(x)$
 \State Compute $\nabla_{x} L(x) $ by $y, \hat{y}, \nabla_x f(x)$.
 \Comment{Key of the proposal}
 \State $\nabla_{\theta_G} \mathcal{L} \leftarrow \nabla_{\theta_G} G(z) \cdot \nabla_x L(x)$
 \State $\theta_G \leftarrow \theta_G - \eta_G \nabla_{\theta_G} \mathcal{L}$
 \EndFor 
 \For{$i \leftarrow 1,\ldots,N_C$}
 \Comment{Training of $\hat{f}$}
 \State $z \leftarrow \mathcal{N}(0, I_r)$
 \Comment{Noise drawn from Gaussian distribution}
 \State $x \leftarrow G(z)$
 \State $y \leftarrow f(x)$, $\hat{y} \leftarrow \hat{f}(x)$
 \State Compute $\nabla_{\theta_{\hat{f}}} \mathcal{L}$.
 \Comment{Backpropagation}
 \State $\theta_{\hat{f}} \leftarrow \theta_{\hat{f}} + \eta_C \nabla_{\theta_{\hat{f}}} \mathcal{L}$
 \EndFor
 \State $q \leftarrow q + N_G + N_C$
 \EndWhile
 	\end{algorithmic}
 \end{algorithm}
\end{figure}

\begin{enumerate}
\item Generate $N_G$ input data and query them to the victim model. Train the generative model using the predictions and explanations (lines 3-9). Here we use the explanations to compute the gradient $\nabla_{\theta_G} \mathcal{L}$.

\item Generate $N_C$ input data and query them to the victim model. Train the clone model with the pair of input data and the predictions (lines 10-15).

\item Repeat 1 and 2 until the number of queries reaches query budget $Q$.

\end{enumerate}

\subsection{Training Generative Models Using Gradient-Based Explanations}
In a data-free model extraction attack, an adversary solves the problem (\ref{loss}) by computing the gradients of the objective function with respect to parameters of a clone model and a generative model alternately. To train the clone model, it is necessary to compute the gradient $\nabla_{\theta_{\hat{f}}} \mathcal{L}$. This can be computed by the adversary using the backpropagation. On the other hand, the adversary cannot obtain the gradient $\nabla_{\theta_G} \mathcal{L}$ to train the generative model in the existing method. By Proposition~\ref{reduction}, it is enough to compute $\nabla_x L(x)$ since we see that $\nabla_{\theta_G} \mathcal{L} = -\nabla_{\theta_G} G(z) \cdot \nabla_x L(x)$. Here, by the chain rule, we have
\begin{equation}\label{loss_grad}
\nabla_x L(x) = \nabla_x f(x) \cdot \nabla_{y} l(f(x), \hat{f}(x)) + \nabla_x \hat{f}(x) \cdot \nabla_{\hat{y}} l(f(x), \hat{f}(x)),
\end{equation}
where $y$ is the first variable and $\hat{y}$ is the second variables of the loss function. In the existing method, the adversary obtains the terms other than $\nabla_xf(x)$. Now the adversary also obtains explanations with Vanilla Gradient $\VG(x) = \nabla_x f(x)$ in MEGEX. Thus, the adversary can compute $\nabla_x L(x)$ by equation (\ref{loss_grad}) exactly. In this method, the adversary can use any almost everywhere differentiable loss function to train the generative model.

\begin{ex}
If the loss function is KL-divergence, we can compute
\[
\nabla_x L(x) = \sum_{j=1}^c (1 + \log \frac{f_j (x)}{\hat{f}_j(x)}) \nabla_x f_j(x) - \frac{f_j(x)}{\hat{f}_j(x)} \nabla_x \hat{f}_j (x) .
\]

If the loss function is $\ell_1$ norm loss, by setting $v := \sigma^*(f(x))$, $s := \sigma^*(\hat{f}(x))$, $A := \frac{1}{c}\sum_{j=1}^c \frac{1}{f_j(x)}$, $\hat{A} := \frac{1}{c}\sum_{j=1}^c \frac{1}{\hat{f}_j(x)}$, we can compute
\[
\nabla_x L(x) = \sum_{i=1}^c {\rm sign}(v_i - s_i)\bigl( (\frac{1}{f_i(x)}- A)\nabla_x f_i(x) - (\frac{1}{\hat{f}_i(x)}-\hat{A})\nabla_x \hat{f}_i(x)\bigr) .
\]

\end{ex}

\subsection{Comparison of the Numbers of Queries}
In MEGEX, the adversary consumes $B \cdot (N_G + N_C)$ query budget in one iteration (from lines 3 to 15 in Algorithm~\ref{proposalmethod}). In contrast, in the existing method DFME~\cite{truong2020data} and MAZE~\cite{kariyappa2020maze}, the adversary needs to query extra $m$ input data for zeroth-order gradient estimation. Hence, the adversary consumes $B \cdot (N_G \cdot (m+1) + N_C)$ query budget in one iteration. Therefore, we see that our proposed method reduces the number of queries with the ratio
\[
\frac{N_G + N_C}{N_G \cdot (m+1) + N_C} .
\]

In accordance with DFME and MAZE, we use $N_G = 1, N_C = 5$ and a batch size $B=256$. In DFME and MAZE, $m=1$ and $m=10$, respectively. Therefore, the number of queries the proposed method needs is 6/7 times that of DFME and 3/8 times that of MAZE. In addition, the adversary in MEGEX can use the exact gradients to train the generative models. 

\section{Experimental Evaluation}
\label{sec:experiment}
We compare our proposed method (MEGEX) with the existing method (DFME).  MEGEX needs fewer queries and gives the adversary an exact gradient to train the generative model. Thus, we can expect MEGEX to achieve a more accurate clone model with fewer queries than that without explanations. In this section, we experimentally evaluate the clone model accuracy per query in both methods with two datasets: CIFAR-10 and SVHN. From the format of datasets, we see that ${\bf X} = \R^{32 \times 32 \times 3}$ and ${\bf Y} = \{y \in \R^{10}_{\ge 0} \mid \sum_{i=1}^{10} y_i = 1 \}$.

\subsection{Experimental Setup}
The implementation of MEGEX was based on that of DFME. \footnote{\url{https://github.com/cake-lab/datafree-model-extraction}} We refer to DFME~\cite{truong2020data} for neural network architectures and hyperparameters.

We evaluate the test data accuracy of clone models. For each pair, \{KL-divergence, $\ell_1$ norm loss\} $\times$ \{MEGEX, DFME\}, we train a clone model and a generative model using the victim model five times.

There were a few trials where the clone model was not trained at all. For this reason, if the accuracy of clone models stays less than 40\% at 10M queries, we stop the training and redo it.

\paragraph{Neural Network Architectures} For each dataset, we use the same architectures as those in the work by Truong~et~al.~\cite{truong2020data}. The victim model architecture is a ResNet-34 and the clone model architecture is a ResNet-18. The generative model architecture is a neural network with three convolutional layers, interleaved with linear up-sampling layers, batch normalization layers, and rectified linear unit (ReLU) activations for all layers except the last one. The last layer is the hyperbolic tangent function.

\paragraph{Training by an Adversary} For each dataset, the following are common settings. The adversary uses KL-divergence and $\ell_1$ norm loss as a loss function and also uses $N_G=1, N_C=5, B= 256$ in Algorithm~\ref{proposalmethod}. The clone model is trained with an SGD optimizer, and the generative model is trained with an Adam optimizer.

The query budgets and the learning rates are different for each dataset. The query budget $Q$ is $20{\rm M}=2.0 \times 10^7$ for CIFAR-10 and 2M for SVHN. For CIFAR-10, the adversary uses the learning rate of the generative model $\eta_G$ of $5.0 \times 10^{-4}$ and that of the clone model $\eta_C$ of 0.1 and for SVHN, that of the generative model $\eta_G$ of $5.0 \times 10^{-5}$ and that of the clone model $\eta_C$ of 0.1. In particular, for CIFAR-10, the adversary uses a learning rate scheduler that multiplies the learning rate by a factor 0.3 at $0.1 \times$ , $0.3 \times$, and $0.5 \times$ the total query budget. For SVHN, learning rate schedulers are not used.

In DFME, the adversary uses $m=1$ and $\ep = 0.001$ for zeroth-order gradient estimation.

\subsection{Experimental Results}
\begin{table}[!t]
\caption{Accuracy (normalized accuracy): Normalized accuracy is the clone accuracy divided by the victim accuracy}
 \label{result_table}
 \centering
 \begin{tabular}{lrrrrr} 
 \toprule
 Dataset (budget) & Victim & MEGEX-$\ell_1$ & MEGEX-KL & DFME-$\ell_1$ & DFME-KL \\ 
 \midrule
 CIFAR-10 (20M) & 95.5$\%$ & 92.3$\%${\small \ ($0.97\times$)} & 72.1$\%${\small \ ($0.75\times$)} & 80.6$\%${\small \ ($0.84\times$)} & 72.1$\%${\small \ ($0.75\times$)} \\
 SVHN (2M) & 94.8$\%$ & 92.5$\%${\small \ ($0.98\times$)} & 89.9$\%${\small \ ($0.95\times$)} & 92.5$\%${\small \ ($0.98\times$)} & 89.7$\%${\small \ ($0.95\times$)} \\
 \bottomrule
 \end{tabular}
\end{table}

The results of the experiments are shown in Table~\ref{result_table}. We plot the mean values of the clone model accuracy in the five experiments in Figs.~\ref{cifar10_acc} and ~\ref{svhn_acc} for each dataset. The horizontal axis is the number of queries, and the vertical axis is the test data accuracy of clone models. The error bars express the standard deviations of the results of the five experiments. The accuracy of the victim model is 95.5\% with CIFAR-10 and 94.8\% with SVHN.

\paragraph{CIFAR-10} MEGEX with $\ell_1$ norm loss achieves the test data accuracy of 72.1\% at 1M queries, 90.4\% at 10M queries, and 92.3\% at 20M queries. DFME with $\ell_1$ norm loss achieves 33.6\% at 1M queries, 74.5\% at 10M queries, and 80.6\% at 20M queries.

For CIFAR-10, our result of DFME-$\ell_1$ is worse than that of the paper by Truong~et~al.~\cite{truong2020data}, where DFME-$\ell_1$ achieves 88.1\% at 20M queries. Nevertheless, the mean of five trials of MEGEX with $\ell_1$ norm loss still outperforms the result reported in the paper. 

Both methods with KL-divergence achieve test data accuracy of $72.1\%$ at 20M queries. We consider this result comes from vanishing gradients, which is also reported by Truong~et~al.~\cite{truong2020data}. Though MEGEX uses the exact direction gradients, they are too small to train the generative model efficiently. 

\paragraph{SVHN} When the dataset is SVHN, there is no difference between the results of MEGEX and DFME. As with CIFAR-10, we also consider this result comes from vanishing gradients.

Both methods with $\ell_1$ norm loss achieve around $92\%$ at 0.5M queries. We consider that this is because the SVHN dataset is an easy task for ResNet-18 with $\ell_1$ norm. Both methods with KL-divergence have low accuracy until around $0.5$M queries and achieve around $90\%$ at $1.6$M queries.

\begin{figure}[!t]
\begin{center}
 \includegraphics[width=0.6\linewidth]{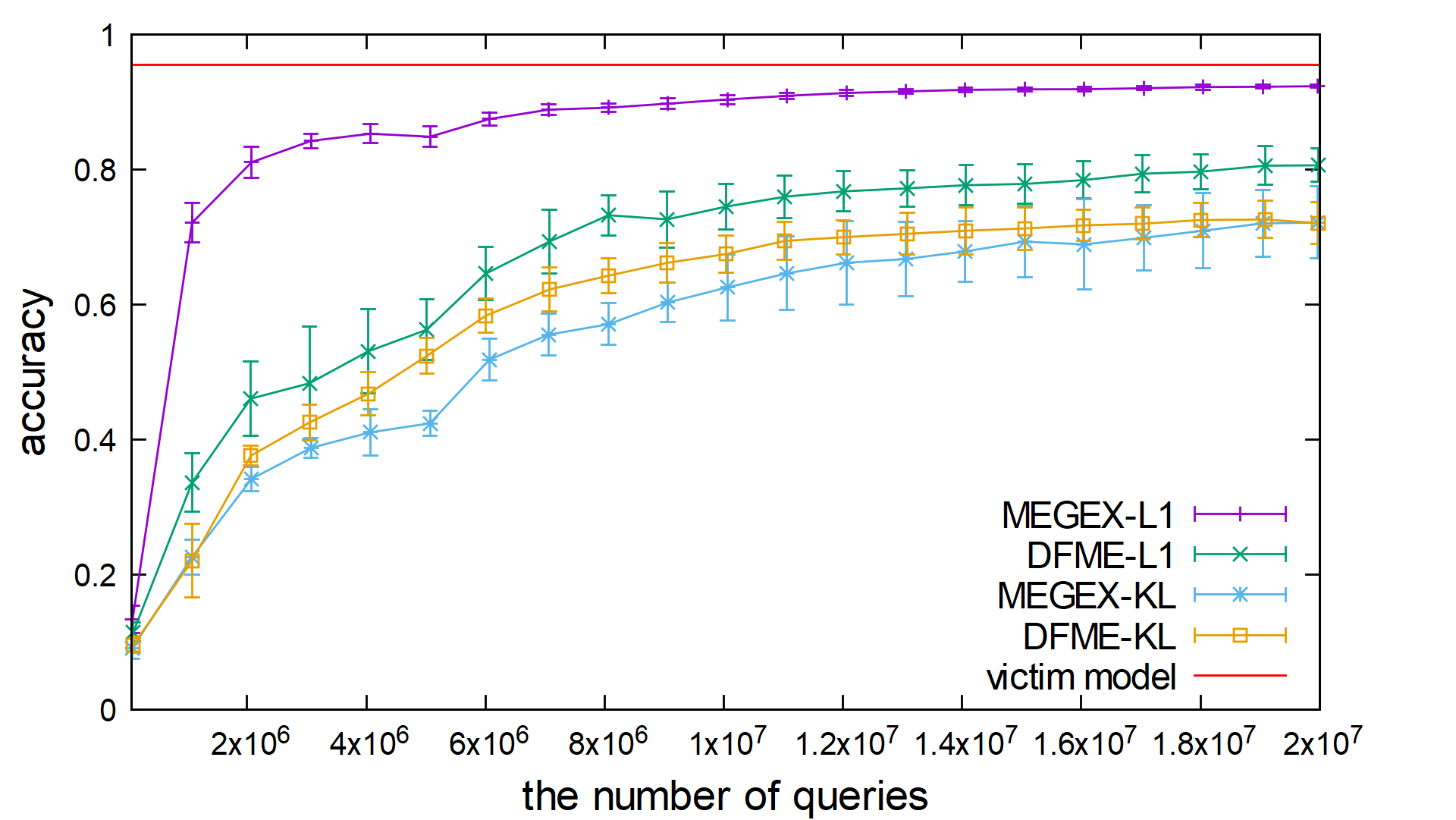}
 \caption{Test accuracy with respect to the number of queries for CIFAR-10.}\label{cifar10_acc}
\end{center}
\end{figure}

\begin{figure}[!t]
\begin{center}
 \includegraphics[width=0.6\linewidth]{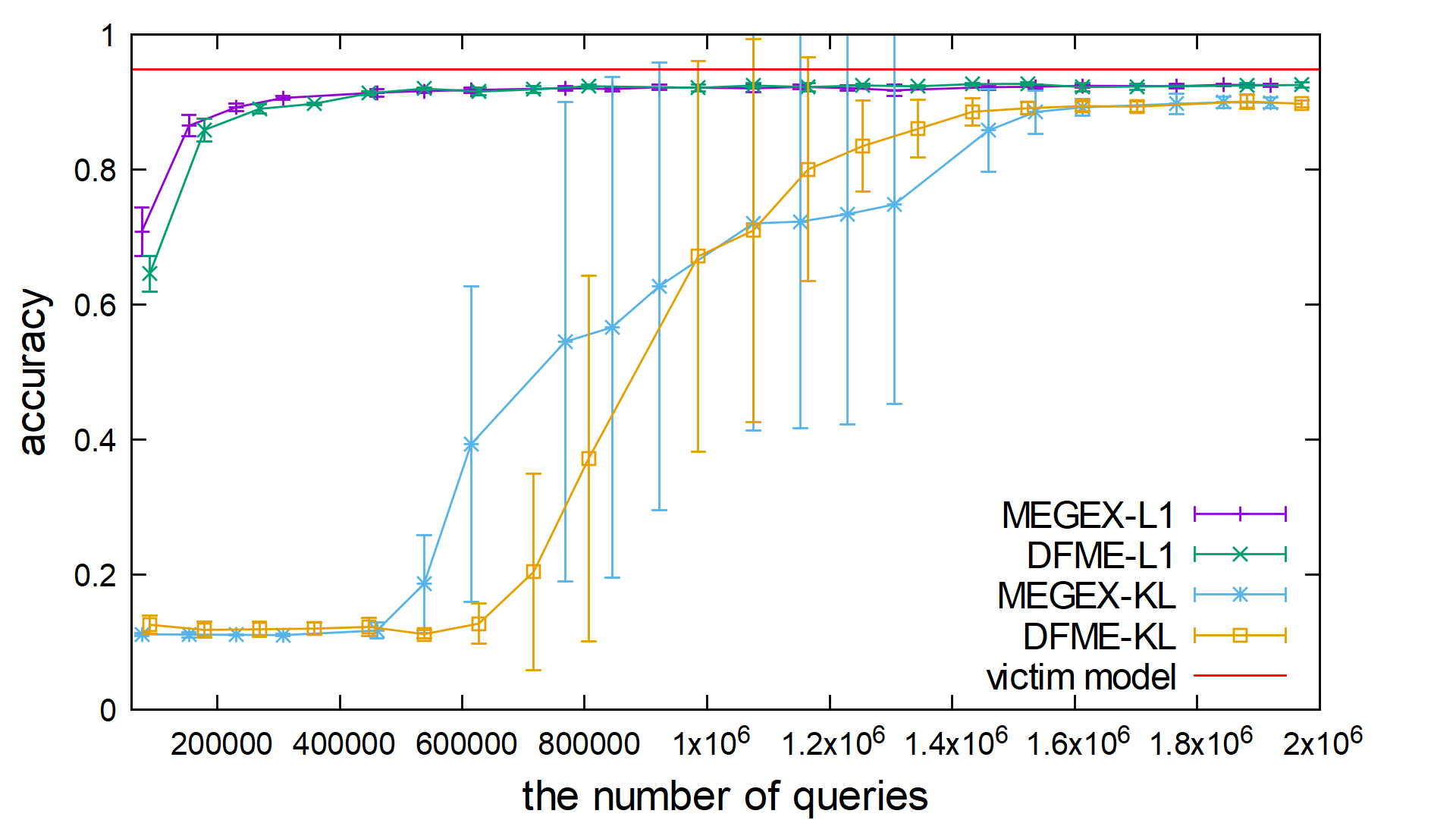} 
 \caption{Test accuracy with respect to the number of queries for SVHN.}\label{svhn_acc}
\end{center}
\end{figure}

\section{Conclusion}
\label{sec:conclusion}
In this research, we are the first to propose a data-free model extraction attack against gradient-based explainable AI and point out a new threat to explainable AI. In the proposed method, MEGEX, using the explanations, the adversary can train the generative model efficiently. Our experiments show that this method reconstructs high-accuracy models -- 0.97$\times$ and 0.98$\times$ the victim model accuracy on SVHN and CIFAR-10 datasets given 2M and 20M queries respectively. In particular, with CIFAR-10, the adversary in the existing method obtains a clone model with 80.6\% test accuracy while the adversary in MEGEX obtains a clone model  with $\ell_1$ norm loss function with 92.3\% test accuracy. This implies that there is a trade-off between the interpretability of models and the difficulty of stealing them. With the advance of machine learning, this problem may become real in the near future.

Before that, we need to investigate whether it is possible to achieve data-free model extraction attacks against other gradient-based explainable AI, like SmoothGrad and Integrated Gradient. We also have to evaluate the existing defense results~\cite{juuti2019prada, orekondy2019prediction, szyller2019dawn}, which are proposed against the existing model extraction attacks.

{\small
\bibliography{paper_arxiv_revised} 

\begin{thebibliography}{10}

\bibitem{adadi2018peeking}
Amina Adadi and Mohammed Berrada.
\newblock Peeking inside the black-box: A survey on explainable artificial
  intelligence ({XAI}).
\newblock {\em IEEE Access}, 6:52138--52160, 2018.

\bibitem{aivodji2020model}
Ulrich A{\"\i}vodji, Alexandre Bolot, and S{\'e}bastien Gambs.
\newblock Model extraction from counterfactual explanations.
\newblock {\em arXiv preprint arXiv:2009.01884}, 2020.

\bibitem{chandrasekaran2020exploring}
Varun Chandrasekaran, Kamalika Chaudhuri, Irene Giacomelli, Somesh Jha, and
  Songbai Yan.
\newblock Exploring connections between active learning and model extraction.
\newblock In {\em 29th {\rm USENIX} Security Symposium ({\rm USENIX} Security
  20)}, pages 1309--1326, 2020.

\bibitem{NEURIPS2018_743394be}
Sanjeeb Dash, Oktay Gunluk, and Dennis Wei.
\newblock Boolean decision rules via column generation.
\newblock In {\em Proceedings of the 32nd Advances in Neural Information
  Processing Systems}, pages 4655--4665.

\bibitem{hinton2015distilling}
Geoffrey Hinton, Oriol Vinyals, and Jeff Dean.
\newblock Distilling the knowledge in a neural network.
\newblock {\em arXiv preprint arXiv:1503.02531}, 2015.

\bibitem{juuti2019prada}
Mika Juuti, Sebastian Szyller, Samuel Marchal, and N~Asokan.
\newblock Prada: protecting against dnn model stealing attacks.
\newblock In {\em 2019 IEEE European Symposium on Security and Privacy
  (EuroS\&P)}, pages 512--527. IEEE, 2019.

\bibitem{kariyappa2020maze}
Sanjay Kariyappa, Atul Prakash, and Moinuddin Qureshi.
\newblock Maze: Data-free model stealing attack using zeroth-order gradient
  estimation.
\newblock {\em arXiv preprint arXiv:2005.03161}, 2020.

\bibitem{milli2019model}
Smitha Milli, Ludwig Schmidt, Anca~D Dragan, and Moritz Hardt.
\newblock Model reconstruction from model explanations.
\newblock In {\em Proceedings of the Conference on Fairness, Accountability,
  and Transparency}, pages 1--9, 2019.

\bibitem{orekondy2019prediction}
Tribhuvanesh Orekondy, Bernt Schiele, and Mario Fritz.
\newblock Prediction poisoning: Towards defenses against dnn model stealing
  attacks.
\newblock In {\em International Conference on Learning Representations}, 2019.

\bibitem{pedapati2020learning}
Tejaswini Pedapati, Avinash Balakrishnan, Karthikeyan Shanmugam, and Amit
  Dhurandhar.
\newblock Learning global transparent models consistent with local contrastive
  explanations.
\newblock {\em Proceedings of the 34th Advances in Neural Information
  Processing Systems}, 2020.

\bibitem{ribeiro2016should}
Marco~Tulio Ribeiro, Sameer Singh, and Carlos Guestrin.
\newblock "{W}hy should {I} trust you?" explaining the predictions of any
  classifier.
\newblock In {\em Proceedings of the 22nd ACM SIGKDD international conference
  on knowledge discovery and data mining}, pages 1135--1144, 2016.

\bibitem{shi2018active}
Yi~Shi, Yalin~E Sagduyu, Kemal Davaslioglu, and Jason~H Li.
\newblock Active deep learning attacks under strict rate limitations for online
  api calls.
\newblock In {\em 2018 IEEE International Symposium on Technologies for
  Homeland Security (HST)}, pages 1--6. IEEE, 2018.

\bibitem{simonyan2013deep}
Karen Simonyan, Andrea Vedaldi, and Andrew Zisserman.
\newblock Deep inside convolutional networks: Visualising image classification
  models and saliency maps.
\newblock {\em arXiv preprint arXiv:1312.6034}, 2013.

\bibitem{smilkov2017smoothgrad}
Daniel Smilkov, Nikhil Thorat, Been Kim, Fernanda Vi{\'e}gas, and Martin
  Wattenberg.
\newblock Smoothgrad: removing noise by adding noise.
\newblock {\em arXiv preprint arXiv:1706.03825}, 2017.

\bibitem{sundararajan2017axiomatic}
Mukund Sundararajan, Ankur Taly, and Qiqi Yan.
\newblock Axiomatic attribution for deep networks.
\newblock In {\em Proceedings of the 34th International Conference on Machine
  Learning}, pages 3319--3328, 2017.

\bibitem{szyller2019dawn}
Sebastian Szyller, Buse~Gul Atli, Samuel Marchal, and N~Asokan.
\newblock Dawn: Dynamic adversarial watermarking of neural networks.
\newblock {\em arXiv preprint arXiv:1906.00830}, 2019.

\bibitem{tramer2016stealing}
Florian Tram{\`e}r, Fan Zhang, Ari Juels, Michael~K Reiter, and Thomas
  Ristenpart.
\newblock Stealing machine learning models via prediction apis.
\newblock In {\em 25th {\rm USENIX} Security Symposium ({\rm USENIX} Security
  16)}, pages 601--618, 2016.

\bibitem{truong2020data}
Jean-Baptiste Truong, Pratyush Maini, Robert Walls, and Nicolas Papernot.
\newblock Data-free model extraction.
\newblock {\em arXiv preprint arXiv:2011.14779}, 2020.

\bibitem{wachter2017counterfactual}
Sandra Wachter, Brent Mittelstadt, and Chris Russell.
\newblock Counterfactual explanations without opening the black box: Automated
  decisions and the gdpr.
\newblock {\em Harv. JL \& Tech.}, 31:841, 2017.

\bibitem{yan2020cache}
Mengjia Yan, Christopher~W Fletcher, and Josep Torrellas.
\newblock Cache telepathy: Leveraging shared resource attacks to learn {\rm
  dnn} architectures.
\newblock In {\em 29th {\rm USENIX} Security Symposium ({\rm USENIX} Security
  20)}, pages 2003--2020, 2020.

\bibitem{yu2020cloudleak}
Honggang Yu, Kaichen Yang, Teng Zhang, Yun-Yun Tsai, Tsung-Yi Ho, and Yier Jin.
\newblock Cloudleak: Large-scale deep learning models stealing through
  adversarial examples.
\newblock In {\em Proceedings of Network and Distributed Systems Security
  Symposium (NDSS)}, 2020.

\end{thebibliography}
\bibliographystyle{plain}
}

\end{document}